\newcommand{\A}{\mathscr{A}}
\newcommand{\FF}{\mathbb{F}}
\newcommand{\MM}{\mathbb{M}}
\newcommand{\FFbar}{\overline{\mathbb{F}}}
\newcommand{\Y}{\mathscr{Y}}
\newcommand{\U}{\mathscr{U}}
\newcommand{\RR}{\mathbb{R}}         
\newcommand{\E}{\mathsf{E}}
\newcommand{\prob}{\mathsf{P}}
\newcommand{\eps}{\varepsilon}
\newcommand{\grad}{\nabla}
\newcommand{\lmarg}[1]{L_{#1}^{\text{\sc b}}}
\newcommand{\lprml}[1]{L_{#1}^{\text{\sc m}}}
\newcommand{\lprof}[1]{L_{#1}^{\text{\sc p}}}
\newcommand{\hlmarg}[1]{\widehat L_{#1}^{\text{\sc b}}}
\newcommand{\dpp}{\mathrm{DP}}
\newcommand{\iid}{\overset{\text{\tiny iid}}{\,\sim\,}}
\theoremstyle{plain}
\newtheorem{theorem}{Theorem}
\theoremstyle{remark}
\theoremstyle{definition}
\theoremstyle{remark}
\theoremstyle{definition}
\newtheorem{assumption}{Assumption}
\theoremstyle{definition}
\newtheorem*{PRalgorithm}{PR Algorithm}
\begin{document}

\title{Semiparametric inference in mixture models with predictive recursion marginal likelihood}
\author{
Ryan Martin \\
Department of Mathematical Sciences \\
Indiana University--Purdue University Indianapolis \\
\url{rgmartin@math.iupui.edu} \\
\mbox{} \\
Surya T. Tokdar \\
Department of Statistical Science \\
Duke University \\
\url{tokdar@stat.duke.edu}
}
\date{April 7, 2011}

\maketitle

\begin{abstract}
Predictive recursion is an accurate and computationally efficient algorithm for nonparametric estimation of mixing densities in mixture models.  In semiparametric mixture models, however, the algorithm fails to account for any uncertainty in the additional unknown structural parameter.  As an alternative to existing profile likelihood methods, we treat predictive recursion as a filter approximation to fitting a fully Bayes model, whereby an approximate marginal likelihood of the structural parameter emerges and can be used for inference.  We call this the predictive recursion marginal likelihood.  Convergence properties of predictive recursion under model mis-specification also lead to an attractive construction of this new procedure.  We show pointwise convergence of a normalized version of this marginal likelihood function.  Simulations compare the performance of this new marginal likelihood approach that of existing profile likelihood methods as well as Dirichlet process mixtures in density estimation.  Mixed-effects models and an empirical Bayes multiple testing application in time series analysis are also considered.

\smallskip

\emph{Keywords and phrases.} Density estimation; Dirichlet process mixture; empirical Bayes; filtering algorithm; marginal likelihood; martingale; mixed effects model; multiple testing; profile likelihood.
\end{abstract}

\section{Introduction}
\label{S:intro}

Consider data $Y_1,\ldots,Y_n$ modeled as independent draws from a common, nonparametric mixture distribution with density 
\begin{equation} 
\label{simp mix mod}
m_f(y) = \int p(y \mid u) f(u) \,d\mu(u), \quad y \in \Y,
\end{equation}
where $(y,u) \mapsto p(y \mid u)$ is a known kernel on $\Y \times \U$ and $f$ is an unknown mixing density in $\FF$, the set of densities with respect to a $\sigma$-finite Borel measure $\mu$ on $\U$.  \cite{nqz} introduced the following stochastic algorithm, called predictive recursion, to estimate $f$ and $m_f$. 

\begin{PRalgorithm}
Choose an initial estimate $f_0 \in \FF$ of $f$, and a sequence of weights $w_1,\ldots,w_n \in (0,1)$.  For $i=1,\ldots,n$, compute the following:
\begin{align}
f_{i}(u) & = (1-w_i)f_{i-1}(u) + w_i \frac{p(Y_i \mid u) f_{i-1}(u)}{\int p(Y_i \mid u') f_{i-1,}(u') \,d\mu(u')}, \quad u \in \U, \label{eq:recursion} \\
m_{i}(y) & = m_{f_{i}}(y) = \int p(y \mid u) f_{i}(u) \,d\mu(u), \quad y \in \Y. \label{eq:recursion.m}
\end{align}
Return $f_{n}$ and $m_{n}$ as the estimates of $f$ and $m_{f}$, respectively. 
\end{PRalgorithm}

The algorithm's strengths include its fast computation and its unique flexibility to estimate the mixing density with respect to any user-specified dominating measure $\mu$.  Predictive recursion also has a close connection to Dirichlet process mixture models; see \citet{nqz}, \citet{quintana2000}, \citet{newtonzhang}, \citet{newton02}, and Section~\ref{S:prml}. \cite{tmg} show that when $Y_1,\ldots,Y_n$ are generated independently from a density $m$ which equals $m_{f^\star}$ for some $f^\star \in \FF$, the resulting estimates $f_n$ and $m_n$ converge as $n \to \infty$, respectively and in appropriate topologies, to $f^\star$ and $m_{f^\star}$; see also \citet{ghoshtokdar} and \citet{martinghosh}.  \citet{mt-rate} show that if $m$ does not equal $m_f$ for any $f \in \FF$, then the estimates still converge, but now the limits are characterized by the minimizer $f^\star$ of the Kullback--Leibler divergence $K(m, m_f) = \int \log \{ m(y) / m_f(y) \} m(y) \, dy$. The minimizer exists and is unique under certain conditions. An upper bound on the rate of this convergence is also available.

In statistical applications, however, an exact description of the kernel $p(y \mid u)$ is rarely available.  It is more common to use a family of kernels $p(y \mid \theta,u)$ indexed by a parameter $\theta \in \Theta$ and model $Y_1,\ldots,Y_n$ as independent draws from a semiparametric mixture 
\begin{equation}
\label{eq:mixture}
m_{f,\theta}(y) = \int p(y \mid \theta,u) f(u) \,d\mu(u), 
\end{equation}
where both $\theta$ and $f$ are unknown.  A frequently encountered example of this is density estimation with mixtures of Gaussian kernels, where $p(y \mid \theta, u) = N(y \mid u, \theta^2)$ with $\theta$ playing the role of a bandwidth.  A related formulation is in the linear mixed effects model $Y_i = U_i + X_i'\beta + \sigma \eps_i$, where $\theta = (\beta,\sigma)$ is unknown, and the density $f$ of the random effect $U_i$ is not restricted to a parametric family.  While $\theta$ is more like a nuisance parameter in the density estimation problem, it takes center stage in the mixed effect model.  In either case, predictive recursion fails to provide any statistical analysis for $\theta$.

\citet{taonewton1999} counter this shortcoming by embedding predictive recursion in a profile likelihood framework.  At any given $\theta$, one runs the predictive recursion algorithm with kernel $p(y \mid \theta,u)$ and a suitable initial guess $f_{0,\theta}$ to recursively compute $f_{i, \theta}$ and $m_{i, \theta}$ for $i=1,\ldots,n$. The final update $m_{n, \theta}$ is then plugged in to give the following profile likelihood in $\theta$:
\begin{equation}
\label{eq:prof}
\lprof{n}(\theta) = \prod_{i=1}^n m_{n,\theta}(Y_i). 
\end{equation}
\citet{taonewton1999} maximize this profile likelihood to estimate $\theta$.  Such a plug-in approach does not account for the lack of precision in estimating the mixing density $f$.  In the density estimation setting, the profile likelihood may be maximized at the zero bandwidth, completely ignoring the extreme variability of the estimates of $f$ at small bandwidths.  Such undesirable behavior can be avoided by imposing a penalty on the estimate of $f$.  But a general framework along these lines is yet to emerge, particularly for problems where inference on $\theta$ is the main focus.  

In this paper we demonstrate that predictive recursion's close connection with the Bayesian paradigm offers a rich alternative to the plug-in approach.  By viewing it as an approximation to fitting a fully Bayesian model on $(\theta,f)$, it is natural to ask whether it can also provide an approximation to the marginal likelihood for $\theta$ as defined by the Bayesian model.  In Section~\ref{S:prml} we show that such an approximation is indeed available and of the form
\begin{equation}
\label{eq:Lprml}
\lprml{n}(\theta) = \prod_{i=1}^n m_{i-1,\theta}(Y_i). 
\end{equation}
The approximate marginal likelihood $L_n(\theta)$, which we call the predictive recursion marginal likelihood, appears to inherit the intrinsic Ockham's razor properties \citep{jefferys} of the original Bayesian formulation.  That is, the $\theta$ values for which the conditional prior on $f$ is more spread out automatically receive greater penalty.  

In Section~\ref{S:asymptotics} we show that if $Y_1,\ldots,Y_n$ are independent samples from a density $m$, then $\log \lprml{n}(\theta)$ equals $-n \inf_{f \in \FF} K(m, m_{f, \theta})$ plus a quantity that grows slower than $n$.  A consequence of this is a convergence property of the maximum predictive recursion marginal likelihood estimate 
\begin{equation}
\label{eq:mprmle}
\hat \theta_n = \arg\max_{\theta \in \Theta} \lprml{n}(\theta)
\end{equation}
that follows from an argument similar to that of \citet{wald1949}. Specifically, if $\Theta$ is finite, then $\hat \theta_n$ converges to $\theta^\star$ as $n \to \infty$, where the limit is characterized by the minimizer $(\theta^\star, f^\star)$ of $K(m, m_{f,\theta})$ over $\Theta \times \FF$.  Our simulation studies suggest that similar results should hold for compact $\Theta$ as well, but so far a proof has eluded us.

An exact sampling distribution for $\hat\theta_n$ in \eqref{eq:mprmle} is not available.  Therefore, for inference on $\theta$ we estimate the standard error of $\hat \theta_n$ via the curvature of $\lprml{n}$ at its maxima.  This is motivated by the interpretation of the predictive recursion marginal likelihood as an approximate Bayesian marginal likelihood for which Laplace approximation applies \citep{tierney.kadane.1986}.

Several examples are presented in Section~\ref{S:examples}.  For density estimation, our simulations indicate that $\lprml{n}$ closely approximates Bayes Dirichlet process mixture marginal likelihood, whereas $\lprof{n}$ is more sporadic, in some cases concentrating on the boundary of the parameter space.  Applications to interval estimation in random-intercept regression models and multiple testing in mixtures of autoregressive process models are also given.

\section{Approximation to the Dirichlet process mixture marginal likelihood}
\label{S:prml}

As noted in \citet{nqz} and \citet{newton02}, the updating scheme \eqref{eq:recursion} has a close connection with the posterior updates in a Bayesian formulation when $f$ is modeled by a Dirichlet process prior. In this section we further explore this connection to establish $L_n(\theta)$ as an approximation to the marginal likelihood of $\theta$ as defined by such a Bayesian formulation. 

To be precise, consider the following extension of the mixture model \eqref{eq:mixture}:
\begin{equation}
\label{eq:Fmixture}
m(y) = m_{F,\theta}(y) = \int p(y \mid \theta,u) \,dF(u), 
\end{equation}
where $F$ is an unknown probability measure on $\U$, not necessarily dominated by $\mu$. Consider a Bayesian formulation
\begin{equation}
\label{eq:bayes.model}
Y_1,\ldots,Y_n \mid (F, \theta) \iid m_{F, \theta}, \quad F \mid \theta \sim \Pi_\theta, \quad \theta \sim \Gamma, 
\end{equation}
where $\Pi_\theta$ is, for each $\theta \in \Theta$,  a probability distribution over the space of probability measures $F$, and $\Gamma$ is a probability distribution on $\Theta$.  The posterior distribution $\Gamma_n$ of $\theta$ given the $n$ observations can be written as $d\Gamma_n(\theta) \propto \lmarg{n}(\theta) \,d\Gamma(\theta)$, where
\[ \lmarg{n}(\theta) = \int \Bigl\{  \prod_{i=1}^n m_{F,\theta}(Y_i) \Bigr\} \,d\Pi_\theta(F)\]
is the marginal likelihood of $\theta$ obtained by integrating out $F$ from \eqref{eq:bayes.model}.  For every $\theta \in \Theta$, let $\Pi_{i,\theta}$ denote the conditional posterior distribution of $F$ given $\theta$ and the first $i$ observations, i.e., $d\Pi_{i,\theta}(F) \propto \bigl\{\prod_{j = 1}^i m_{F, \theta}(Y_j) \bigr\} \, d\Pi_\theta(F)$.  Then by linearity of $m_{F,\theta}$ and Fubini's theorem, 
\begin{align} 
\lmarg{n}(\theta) & = \prod_{i=1}^n \int m_{F, \theta}(Y_i) \, d\Pi_{i - 1,\theta}(F) = \prod_{i=1}^n \int p(Y_i \mid \theta,u) \,dF(u)  d\Pi_{i - 1,\theta}(F) \notag \\
& = \prod_{i=1}^n \int p(Y_i \mid \theta,u) \,d F_{i-1,\theta}(u), \label{eq:lmargsimp}
\end{align}
where $F_{i, \theta} = \int F \, d\Pi_{i ,\theta}(F)$ is the conditional posterior mean of $F$ given $(Y_1,\ldots,Y_i,\theta)$.  

Now consider the special case where $\Pi_\theta = \dpp(\alpha_0, F_{0,\theta})$, the Dirichlet process distribution with precision parameter $\alpha_0 > 0$ and base measure $F_{0, \theta}$ \citep{ferguson1973,ghoshramamoorthi}.  Assume that the base measures $F_{0, \theta}$ are all absolutely continuous with respect to $\mu$, admitting densities $f_{0,\theta} = dF_{0,\theta}/d\mu \in \FF$. It follows from the Polya urn representation of a Dirichlet process \citep{blackwellmacqueen} that
\begin{equation}
\label{eq:dpupdate}
dF_{1,\theta}(u) = \frac{\alpha}{\alpha+1} d F_{0,\theta}(u) + \frac{1}{\alpha+1} \frac{p(Y_1 \mid \theta,u) \,dF_{0,\theta}(u)}{\int p(Y_1 \mid \theta,u') \,dF_{0,\theta}(u')}. 
\end{equation}
Therefore, $F_{1,\theta}$ is absolutely continuous with respect to $\mu$ and the density  $dF_{1,\theta} / d\mu \in \FF$ is identical to the predictive recursion output $f_{1,\theta}$ based on the single observation $Y_1$, with initial guess $f_{0,\theta}$, kernel $p(y \mid \theta,u)$ and weight $w_1 = 1 / (1 + \alpha_0)$. Consequently $\lmarg{2}(\theta) = \lprml{2}(\theta)$ as can be verified by comparing \eqref{eq:Lprml} and \eqref{eq:lmargsimp}.

This analogy, however, does not carry over to $\lmarg{i}(\theta)$ and $\lprml{i}(\theta)$ for $i \geq 3$. For $i = 3$, the relevant conditional posterior mean $F_{2,\theta}$ does not admit a representation as in \eqref{eq:dpupdate} in terms of $F_{1,\theta}$ and $p(Y_2 \mid \theta,u)$ because the conditional posterior distribution $\Pi_{1,\theta}$ is no longer a Dirichlet process distribution, but rather a mixture of Dirichlet processes \citep{antoniak1974}. 

To remedy this, consider an approximation to the Bayesian model, where we successively replace $\Pi_{i, \theta}$ with $\widehat \Pi_{i, \theta} = \dpp(\alpha_i, \widehat F_{i, \theta})$ where $\widehat F_{0, \theta} = F_{0,\theta}$ and 
\[\widehat F_{i, \theta} = \frac{\int F \, m_{F,\theta}(Y_i) \, d\widehat\Pi_{i - 1, \theta}(F) }{ \int m_{F, \theta}(Y_i) \, d\widehat \Pi_{i - 1, \theta}(F)}, \quad i \geq 1, \]
is what one would obtain for $E(F \mid Y_1, \cdots, Y_i, \theta)$ if the conditional posterior of $F$ given $(Y_1,\ldots,Y_{i - 1},\theta)$ was indeed $\widehat \Pi_{i - 1, \theta}$. These successive replacements can be thought of as a dynamic, mean preserving, filter approximation to the original Bayesian model.  Note that every $\widehat F_{i,\theta}$ remains absolutely continuous with respect to $\mu$ and satisfies the recursion
\[ d\widehat F_{i,\theta}(u) = \frac{\alpha_{i - 1}}{1 + \alpha_{i - 1}} d\widehat F_{i - 1, \theta}(u) + \frac{1}{1 + \alpha_{i - 1}} \frac{p(Y_i \mid \theta,u)\,d\widehat F_{i - 1, \theta}(u)}{\int p(Y_i \mid \theta,u') \, d\widehat F_{i-1,\theta}(u')}. \]
This, coupled with the initial condition $\widehat F_{0,\theta} = F_{0,\theta}$, implies that the densities $d \widehat F_{i,\theta} / d\mu$ are precisely the $f_{i, \theta}$ that result from predictive recursion applied to the observations $Y_1,\ldots,Y_n$, with initial guess $f_{0,\theta}$, kernel $p(y \mid \theta,u)$ and weights $w_i = 1/ (1 + \alpha_{i - 1})$.  Therefore the corresponding approximation $\hlmarg{n}(\theta) = \prod_{i = 1}^n \int p(Y_i \mid \theta,u) \, d\widehat F_{i-1,\theta}(u)$ of $\lmarg{n}(\theta)$ is exactly $\lprml{n}(\theta)$.

For every $\theta \in \Theta$, the quantity $\prod_{i=1}^n m_{i-1,\theta}(Y_i)$ indeed defines a joint probability density for $(Y_1,\ldots,Y_n)$ which admits $\lprml{n}(\theta)$ as an exact likelihood function for $\theta$.  However it is unknown whether this joint density corresponds to any exchangeable hierarchical model on the $Y_i$'s, thus making it somewhat unsuitable to use it for statistical analysis. For this reason, we do not focus on studying $\lprml{n}(\theta)$ from the perspective of the joint model for which it is an exact likelihood function. We are instead interested in studying it as an inferential tool when $Y_i$'s are generated independently from a common density $m$ which may or may not be a mixture as in \eqref{eq:mixture}.  

\section{Asymptotic theory}
\label{S:asymptotics}

\subsection{Notation and preliminaries}

For $\FF$, the set of all densities on $\mathscr{U}$ with respect to $\mu$ as in Section~\ref{S:intro}, let $\FFbar$ be its closure with respect to the weak topology.  With a slight abuse of notation, the elements of $\FFbar$ are also denoted by $f$, although they need not admit a density with respect to $\mu$.  For each $\theta$, let $\MM_\theta = \{m_{f,\theta} : f \in \FFbar\}$. \citet{mt-rate} show that the predictive recursion estimates $m_{n,\theta}$ converge, for each fixed $\theta$, to the best mixture density in $\MM_\theta$, if the following assumptions hold.

\begin{assumption}
\label{as:distribution}
Observations $Y_1,Y_2,\ldots$ are independent with a common density $m$, and $K(m,m')$ is finite for all $m' \in \MM = \bigcup_{\theta \in \Theta} \MM_\theta$.  
\end{assumption}
\begin{assumption}
\label{as:weights}
The weight sequence satisfies $\sum_n w_n = \infty$ and $\sum_n w_n^2 < \infty$.
\end{assumption}
\begin{assumption}
\label{as:precompact}
The set $\FFbar$ is compact with respect to the weak topology.  
\end{assumption}
\begin{assumption}
\label{as:continuity}
The mapping $u \mapsto p(y \mid \theta,u)$ is bounded and continuous for all $(y,\theta)$.
\end{assumption}
\begin{assumption}
\label{as:bound1}
For each $(\theta_1,\theta_2)$ pair, there exists $A = A(\theta_1,\theta_2) < \infty$ such that 
\[ \sup_{u_1,u_2} \int \Bigl\{ \frac{p(y|\theta_1,u_1)}{p(y|\theta_2,u_2)} \Bigr\}^2 m(y)\,dy \leq A. \]
\end{assumption}

Assumption \ref{as:weights} is standard in the literature on stochastic approximation algorithms, of which predictive recursion is a special case \citep{martinghosh}, and it holds if $w_n$ decays like $n^{-\gamma}$ for $\gamma \in (1/2,1]$.  Assumption~\ref{as:precompact} is satisfied if, for example, $\mathscr{U}$ is compact and $\mu$ is Lebesgue measure. The more demanding Assumption~\ref{as:bound1}, holds for many standard kernels $p(y \mid \theta,u)$, such as those arising from Gaussian or other exponential family distributions, whenever $\mathscr{U}$ is compact and $m$ admits a moment-generating function on $\mathscr{U}$. 

Define the mapping 
\begin{equation}
\label{eq:KLopt}
K^\star(\theta) = \inf\{K(m,m_{f,\theta}): f \in \FFbar\}, \quad \theta \in \Theta, 
\end{equation}
the smallest Kullback--Leibler divergence over $\MM_\theta$.  Attainment of the infimum in \eqref{eq:KLopt} follows from Assumptions~\ref{as:precompact} and \ref{as:continuity}; see \citet{mt-rate}, Lemma~3.1.  Let $a_n = \sum_{i=1}^n w_i$ denote the partial sums of the weight sequence $\{w_n: n \geq 1\}$.  For two real sequences $\{\alpha_n\}$ and $\{\beta_n\}$, we write $\alpha_n = O(\beta_n)$ if $\alpha_n/\beta_n$ is bounded, and $\alpha_n = o(\beta_n)$ if $\alpha_n / \beta_n \to 0$.  Then \citet{mt-rate} prove a version of the following theorem.  

\begin{theorem}
\label{thm:mt}
Under Assumptions~\ref{as:distribution}--\ref{as:bound1}, $K(m,m_{n,\theta}) \to K^\star(\theta)$ almost surely as $n \to \infty$ for each fixed $\theta$. In addition to Assumptions~\ref{as:distribution}--\ref{as:bound1}, if the infimum in \eqref{eq:KLopt} is attained in the interior of $\FF$, and if $\sum_n a_n w_n^2 < \infty$, then $K(m,m_{n,\theta}) - K^\star(\theta) = o(a_n^{-1})$ almost surely. 
\end{theorem}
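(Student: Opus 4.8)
The plan is to fix $\theta$ (and suppress it from the notation) and to analyze predictive recursion as a stochastic approximation scheme with Lyapunov functional $K_i := K(m,m_{i,\theta})$ along the filtration $\mathcal{F}_i = \sigma(Y_1,\ldots,Y_i)$, so that $m_{i-1,\theta}$ is $\mathcal{F}_{i-1}$-measurable while $Y_i\sim m$ is independent of $\mathcal{F}_{i-1}$. Writing the density update as $m_{i,\theta} = m_{i-1,\theta}(1 + w_i\Delta_i)$, where $\Delta_i = \tilde m_i/m_{i-1,\theta} - 1$ and $\tilde m_i$ is the mixture produced by a single Bayesian update at $Y_i$, I would expand $\log(1+w_i\Delta_i)$ to second order. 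This is legitimate because $1+w_i\Delta_i\ge 1-w_i>0$, and after integrating against $m$ and taking $\E[\,\cdot\mid\mathcal{F}_{i-1}\,]$ it should yield the almost-supermartingale inequality
\[ \E[\,K_i - K^\star(\theta)\mid\mathcal{F}_{i-1}\,] \;\le\; \bigl(K_{i-1}-K^\star(\theta)\bigr) - w_i D_{i-1} + C w_i^2, \]
in which $D_{i-1} = \var_{f_{i-1,\theta}}(\psi_{i-1})\ge 0$ is the variance, under $f_{i-1,\theta}$, of $\psi_{i-1}(u) = \int p(y\mid\theta,u)\,m(y)/m_{i-1,\theta}(y)\,dy$. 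This $D_{i-1}$ is precisely the negative directional derivative of the convex map $f\mapsto K(m,m_{f,\theta})$ along the recursion's mean increment, and the constant $C$ is finite by Assumption~\ref{as:bound1}: two applications of Jensen's inequality bound $\E[\int\Delta_i^2\,m\,dy\mid\mathcal{F}_{i-1}]$ by a multiple of the ratio bound $A(\theta,\theta)$, uniformly in $i$.

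Given this inequality and $K_i\ge K^\star(\theta)$, I would apply the Robbins--Siegmund almost-supermartingale convergence theorem together with $\sum_i w_i^2<\infty$ (Assumption~\ref{as:weights}) to obtain that $K_i$ converges almost surely and that $\sum_i w_i D_{i-1}<\infty$ almost surely. Since $\sum_i w_i=\infty$, the latter forces $\liminf_i D_{i-1}=0$; extracting a subsequence along which $f_{i-1,\theta}$ converges weakly (Assumption~\ref{as:precompact}) to some $f_\infty$ and using boundedness and continuity of the kernel (Assumption~\ref{as:continuity}), the limit satisfies $\var_{f_\infty}(\psi_{f_\infty})=0$. Identifying $K(m,m_{f_\infty,\theta})$ with $K^\star(\theta)$ then reduces, by convexity of $f\mapsto K(m,m_{f,\theta})$, to the global first-order optimality condition $\psi_{f_\infty}(u)\le 1$ for all $u$.

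This last identification is the step I expect to be the main obstacle. A vanishing variance $\var_{f_\infty}(\psi_{f_\infty})=0$ only forces $\psi_{f_\infty}$ to be constant on the support of $f_\infty$, whereas global optimality needs the two-sided bound $\psi_{f_\infty}\le 1$ everywhere; the discrepancy is exactly the possibility that mass escapes, in the weak limit, from a region on which $\psi_{f_\infty}>1$. I would rule this out by exploiting that predictive recursion keeps $f_{i,\theta}$ supported on all of $\U$ at every finite stage, together with the sign of the drift---where $\psi>1$ the mean increment of $f$ is positive---so that $\psi_{f_\infty}>1$ cannot persist on any set. This gives $K(m,m_{f_\infty,\theta})=K^\star(\theta)$, and hence the almost sure limit of $K_i$ equals $K^\star(\theta)$.

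For the rate I would add the interior-attainment hypothesis, which makes $\psi_{f^\star}\equiv 1$ on $\U$ and endows $K(m,m_{f,\theta})$ with locally strongly convex (quadratic) behaviour at $f^\star$; this upgrades the qualitative $D_{i-1}\ge 0$ to a quantitative comparison $D_{i-1}\ge c\,(K_{i-1}-K^\star(\theta))$ for some $c>0$ once $f_{i-1,\theta}$ lies in a neighbourhood of $f^\star$, which the already-established almost sure convergence guarantees for all large $i$. Substituting this into the one-step inequality yields, with $\delta_i = K_i - K^\star(\theta)$, a contraction $\E[\delta_i\mid\mathcal{F}_{i-1}]\le(1-c\,w_i)\delta_{i-1}+Cw_i^2$. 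Applying Robbins--Siegmund to the rescaled sequence $a_n\delta_n$, and using $a_n(1-c\,w_n)\le a_{n-1}$ for large $n$, the hypothesis $\sum_n a_n w_n^2<\infty$ makes the error term summable, so $a_n\delta_n$ converges almost surely and $\sum_n (c\,a_n-1)w_n\delta_{n-1}<\infty$; since a nonzero limit of $a_n\delta_n$ would make $\delta_{n-1}$ comparable to $a_{n-1}^{-1}$ and contradict $\sum_n w_n=\infty$, the limit must be $0$, i.e.\ $K(m,m_{n,\theta})-K^\star(\theta)=o(a_n^{-1})$ almost surely. Establishing the gap comparison $D_{i-1}\ge c\,\delta_{i-1}$ uniformly on the relevant neighbourhood of $f^\star$ is the technical crux of this part.
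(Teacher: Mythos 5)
First, a point of reference: the paper contains no proof of Theorem~\ref{thm:mt} to compare against --- it is imported verbatim from \citet{mt-rate} (``\citet{mt-rate} prove a version of the following theorem''), and the only in-paper trace of that argument is the citation of its Lemmas 3.1 and 4.4 elsewhere. Measured against what that source is known to do, your skeleton for the first claim is the right one: treat $K_i=K(m,m_{i,\theta})$ as a Lyapunov sequence along $\A_i=\sigma(Y_1,\ldots,Y_i)$, write $m_{i,\theta}=m_{i-1,\theta}(1+w_i\Delta_i)$ with $\Delta_i\ge-1$, expand the logarithm, check that the conditional drift is exactly $-w_i\var_{f_{i-1,\theta}}(\psi_{i-1})$ with an $O(w_i^2)$ remainder controlled by Assumption~\ref{as:bound1}, and apply Robbins--Siegmund to obtain almost sure convergence of $K_i$ and $\sum_i w_iD_{i-1}<\infty$. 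Those computations are correct.

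The first genuine gap is the one you flag yourself: identifying the limit of $K_i$ with $K^\star(\theta)$. Your proposed fix --- ``where $\psi>1$ the mean increment of $f$ is positive, so $\psi_{f_\infty}>1$ cannot persist'' --- is not an argument. The supermartingale bound controls $f_i$ only through the scalar $D_{i-1}=\var_{f_{i-1,\theta}}(\psi_{i-1})$, and $\sum_i w_iD_{i-1}<\infty$ is perfectly compatible with $f_i$ having full support at every finite stage while its mass drifts, ever more slowly, toward a proper subset of $\U$ on whose complement $\psi_{f_\infty}>1$; weak compactness gives you a limit point, not optimality of that limit point. What is needed is a quantitative bridge from $D$ to the KL gap itself --- e.g.\ the Jensen bound $K(m,m_{f,\theta})-K^\star(\theta)\le\log\int\psi_f\,dF^\star$ combined with a Cauchy--Schwarz comparison of $\int(\psi_f-1)\,dF^\star$ with $\var_f(\psi_f)$ --- so that one proves $K_\infty=K^\star(\theta)$ directly rather than via a weak limit point. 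Without some such device the first assertion is not established.

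The second gap is worse than an omission: the inequality $D_{i-1}\ge c\,\{K_{i-1}-K^\star(\theta)\}$ on a neighbourhood of $f^\star$ is almost certainly false in this infinite-dimensional setting. Expanding at $f^\star$ (where $\psi_{f^\star}\equiv1$) in a direction $g$ with $\int g\,d\mu=0$, the KL gap behaves like $\tfrac{\eps^2}{2}\|h\|^2_{L^2(m)}$ with $h=m_{g,\theta}/m_{f^\star,\theta}$, while the variance behaves like $\eps^2\|Th\|^2_{L^2(f^\star)}$, where $T$ is the integral operator with kernel $p(y\mid\theta,u)/m_{f^\star,\theta}(y)$. For the smooth kernels covered by Assumption~\ref{as:bound1} (Gaussian, exponential family) $T$ is compact, its singular values accumulate at $0$, and no uniform $c>0$ exists. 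A telling symptom: if your contraction held, the recursion $\E(\delta_i\mid\A_{i-1})\le(1-cw_i)\delta_{i-1}+Cw_i^2$ would deliver $\delta_n=O(w_n)$, strictly better than the $o(a_n^{-1})$ actually claimed; the modesty of the stated rate is itself evidence that the argument in \citet{mt-rate} does not run through local strong convexity, but through a weaker (square-root type) comparison in which the interior-attainment hypothesis serves to dominate $dF^\star/d\mu$ rather than to produce a spectral gap. As written, the rate part of your proposal rests on an unproved and likely false inequality.
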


For weights that satisfy $w_n = O(n^{-\gamma})$, the extra condition, $\sum_n a_n w_n^2 < \infty$, in the second part of the theorem holds if and only if $\gamma \in (2/3,1]$.  Therefore, the best available rate in this case is $o(n^{-1/3})$ almost surely.  But \citet{mt-rate} argue that this rate is conservative.

\subsection{Main results}

Write $\ell_n = \log \lprml{n}$ and define the following normalization:
\begin{equation}
\label{eq:Kn}
K_n(\theta) = \frac1n \sum_{i=1}^n \log \frac{m(Y_i)}{m_{i-1,\theta}(Y_i)} = -\frac{\ell_n(\theta) - \ell_{0n}}{n},
\end{equation}
where $\ell_{0n}= \sum_{i = 1}^n \log m(Y_i)$ is the log joint density of $Y_1, \ldots, Y_n$.  We will show that the main result of Theorem~\ref{thm:mt} holds with $K_n(\theta)$ in place of $K(m,m_{n,\theta})$. 

\begin{assumption}
\label{as:bound2}
For each $\theta$, there exists $B = B_\theta < \infty$, such that the density $m_{f,\theta} \in \MM_\theta$ at which the infimum in \eqref{eq:KLopt} is attained satisfies
\[ \int \Bigl\{ \log \frac{m(y)}{m_{f,\theta}(y)} \Bigr\}^2 m(y) \,dy \leq B. \]
\end{assumption}

If the mixture model is correctly specified, i.e., $m \in \MM$, then Assumption~\ref{as:bound2} is a consequence of Assumption~\ref{as:bound1} and Jensen's inequality; see the argument for \eqref{eq:jensen} in Appendix~1.  But this assumption seems reasonable even if $m \not\in \MM$, as it assures that the proposed mixture model \eqref{eq:mixture} is, in a certain sense, close enough to the truth.  

\begin{theorem}
\label{thm:emp-KL}
Under Assumptions~\ref{as:distribution}--\ref{as:bound2}, $K_n(\theta) \to K^\star(\theta)$ almost surely as $n \to \infty$ for each fixed $\theta$.  In addition to Assumptions~\ref{as:distribution}--\ref{as:bound2}, if the infimum in \eqref{eq:KLopt} is attained in the interior of $\FF$, and if $\sum_n a_n w_n^2 < \infty$, then $K_n(\theta) - K^\star(\theta) = O(a_n^{-1})$ almost surely. 
\end{theorem}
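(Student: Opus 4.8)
The plan is to write $K_n(\theta)$ as the sum of a predictable part that Theorem~\ref{thm:mt} already controls and a martingale part that is negligible. Let $\mathcal{F}_{i-1} = \sigma(Y_1,\ldots,Y_{i-1})$, and note that $m_{i-1,\theta}$ is $\mathcal{F}_{i-1}$-measurable while $Y_i$ is independent of $\mathcal{F}_{i-1}$ with density $m$. Hence
\begin{equation*}
\E\Bigl[ \log \frac{m(Y_i)}{m_{i-1,\theta}(Y_i)} \,\Big|\, \mathcal{F}_{i-1} \Bigr] = \int \log \frac{m(y)}{m_{i-1,\theta}(y)} \, m(y)\,dy = K(m, m_{i-1,\theta}).
\end{equation*}
Setting $\xi_i = \log\{m(Y_i)/m_{i-1,\theta}(Y_i)\} - K(m,m_{i-1,\theta})$, the array $\{\xi_i\}$ is a martingale difference sequence with respect to $\{\mathcal{F}_i\}$, and
\begin{equation*}
K_n(\theta) = \frac1n \sum_{i=1}^n K(m, m_{i-1,\theta}) + \frac1n \sum_{i=1}^n \xi_i.
\end{equation*}
It then suffices to show that the first (predictable) term converges to $K^\star(\theta)$ at the stated rate and that the second (martingale) term is of smaller order.

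For the predictable term, the first part of Theorem~\ref{thm:mt} gives $K(m,m_{i-1,\theta}) \to K^\star(\theta)$ almost surely, so by Ces\`aro's lemma $\frac1n\sum_{i=1}^n K(m,m_{i-1,\theta}) \to K^\star(\theta)$ almost surely, which yields the first assertion once the martingale term is handled. For the rate, the second part of Theorem~\ref{thm:mt} gives $K(m,m_{i-1,\theta}) - K^\star(\theta) = o(a_{i-1}^{-1})$ almost surely; a routine Toeplitz summation argument, using that for the admissible weights $\frac1n\sum_{i=1}^n a_i^{-1} = O(a_n^{-1})$, then upgrades this to $\frac1n\sum_{i=1}^n\{K(m,m_{i-1,\theta}) - K^\star(\theta)\} = o(a_n^{-1})$.

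The crux is the martingale term, where the obstacle is a uniform-in-$i$ bound on the conditional variances $\E[\xi_i^2\mid\mathcal{F}_{i-1}] \le \int \{\log(m/m_{i-1,\theta})\}^2 m\,dy$. Let $m_{f^\star,\theta}$ denote the element of $\MM_\theta$ attaining $K^\star(\theta)$. Writing $\log(m/m_{i-1,\theta}) = \log(m/m_{f^\star,\theta}) + \log(m_{f^\star,\theta}/m_{i-1,\theta})$ and using $(a+b)^2 \le 2a^2+2b^2$, the first piece contributes at most $2B$ by Assumption~\ref{as:bound2}. For the second piece I would bound the ratio of two mixtures of the same kernel by a double application of Jensen's inequality (the Jensen-type estimate referenced in the text preceding Assumption~\ref{as:bound2}): for any mixing densities $f,g$,
\begin{equation*}
\int \Bigl\{ \frac{m_{f,\theta}(y)}{m_{g,\theta}(y)} \Bigr\}^2 m(y)\,dy \le \int\!\!\int \Bigl[ \int \Bigl\{ \frac{p(y\mid\theta,u)}{p(y\mid\theta,u')} \Bigr\}^2 m(y)\,dy \Bigr] f(u)\,g(u')\,d\mu(u)\,d\mu(u') \le A,
\end{equation*}
the inner bracket being bounded by $A$ via Assumption~\ref{as:bound1}. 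Applying this to the pair $(f^\star, f_{i-1,\theta})$ and to its reciprocal bounds both $\int (m_{f^\star,\theta}/m_{i-1,\theta})^2 m$ and $\int(m_{i-1,\theta}/m_{f^\star,\theta})^2 m$ by $A$, uniformly in $i$ (the argument is unchanged if $f^\star$ is only a mixing measure, since $f_{i-1,\theta}$ is a genuine density). Combining with the elementary inequality $(\log r)^2 \le C(r^2 + r^{-2})$ for $r>0$ then yields a uniform bound $\E[\xi_i^2\mid\mathcal{F}_{i-1}] \le V < \infty$ almost surely.

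With the conditional variances bounded we have $\sum_i \E[\xi_i^2\mid\mathcal{F}_{i-1}]/i^2 < \infty$, so a martingale strong law (Kronecker's lemma applied to $\sum_i \xi_i/i$) gives $\frac1n\sum_{i=1}^n\xi_i \to 0$ almost surely, completing the first assertion. For the rate, the same variance bound together with the martingale law of the iterated logarithm gives $\frac1n\sum_{i=1}^n \xi_i = O(\sqrt{(\log\log n)/n})$ almost surely; since the admissibility condition $\sum_n a_n w_n^2 < \infty$ forces the decay exponent of $a_n^{-1}$ to exceed $-1/2$, this is $o(a_n^{-1})$ and hence dominated by the target rate. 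Collecting the predictable and martingale bounds gives $K_n(\theta) - K^\star(\theta) = O(a_n^{-1})$ almost surely. I expect the uniform conditional-variance bound --- in particular making the double-Jensen estimate hold along the entire predictive recursion trajectory rather than only at the optimum --- to be the main technical obstacle.
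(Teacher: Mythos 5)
Your proposal is correct and follows essentially the same route as the paper: the identical martingale-difference decomposition of $K_n(\theta)$, the same uniform conditional-variance bound obtained from Assumption~\ref{as:bound2} together with a double application of Jensen's inequality under Assumption~\ref{as:bound1}, and a Ces\`aro argument for the predictable term. The only divergence is in the auxiliary results used for the rate (you invoke a Toeplitz average of the $o(a_n^{-1})$ rates and a martingale law of the iterated logarithm, while the paper uses summation by parts with Lemma~4.4 of \citet{mt-rate} and Corollary~2 of \citet{teicher98}); note that the martingale LIL needs hypotheses beyond bounded conditional variances, but the weaker bound you actually need, $(a_n/n)\sum_i \xi_i \to 0$, already follows from your own Chow--Kronecker step applied with normalizer $b_n = n/a_n$, since $\sum_n b_n^{-2} < \infty$ when $\sum_n a_n w_n^2 < \infty$.
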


\begin{proof}
See Appendix~1.
\end{proof}

From Theorem~\ref{thm:emp-KL} we can conclude that $\ell_n(\theta) = -nK^\star(\theta) + \ell_{0n} + o(n)$.  Therefore for any $\theta_1, \theta_2$ with $K^\star(\theta_1) \neq K^\star(\theta_2)$, the difference $\ell_n(\theta_1) - \ell_n(\theta_2)$ grows linearly in $n$. Such linear growth in log likelihood differences is the building block of Wald's famous proof of consistency of maximum likelihood estimators.  In our case, we have the following.  

\begin{theorem}
\label{thm:m-prml-e}
Suppose $\Theta$ is a finite set and Assumptions 1--6 hold. Let $\Theta^\star = \{\theta': K^\star(\theta') = \inf_\Theta K^\star(\theta)\}$. Then $\hat \theta_n \in \Theta^\star$ almost surely for all sufficiently large $n$.  In particular, if $\Theta^\star = \{\theta^\star\}$, then $\hat \theta_n \to \theta^\star$ almost surely.
\end{theorem}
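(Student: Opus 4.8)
The plan is to reduce the maximization of $\lprml{n}$ to the minimization of the normalized quantity $K_n$, and then exploit the finiteness of $\Theta$ together with the pointwise convergence already established in Theorem~\ref{thm:emp-KL}. First I would rewrite the objective: by \eqref{eq:Kn} we have $\ell_n(\theta) = \ell_{0n} - n K_n(\theta)$, and since $\ell_{0n}$ does not depend on $\theta$ and $n > 0$, maximizing $\lprml{n}(\theta) = \exp\{\ell_n(\theta)\}$ over $\theta$ is the same as minimizing $K_n(\theta)$. Hence $\hat\theta_n \in \arg\min_{\theta \in \Theta} K_n(\theta)$, and it suffices to show that every minimizer of $K_n$ lies in $\Theta^\star$ for all large $n$.

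Next, because $\Theta$ is finite, the first part of Theorem~\ref{thm:emp-KL} supplies, for each $\theta \in \Theta$, an event of probability one on which $K_n(\theta) \to K^\star(\theta)$; intersecting these finitely many events yields a single probability-one event $\Omega_0$ on which $K_n(\theta) \to K^\star(\theta)$ simultaneously for every $\theta \in \Theta$. I would then work on $\Omega_0$ throughout. Introduce the gap $\delta = \min_{\theta \notin \Theta^\star} K^\star(\theta) - \min_{\theta \in \Theta} K^\star(\theta)$, which is strictly positive precisely because $\Theta$ is finite and $\Theta^\star$ is, by definition, the exact set of minimizers of $K^\star$ (if $\Theta^\star = \Theta$ there is nothing to prove).

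Fix any $\theta_0 \in \Theta^\star$. By the simultaneous convergence there is an $N$ such that, for all $n \geq N$, one has $K_n(\theta_0) < \min_\Theta K^\star + \delta/2$ and, for every $\theta \notin \Theta^\star$, $K_n(\theta) > K^\star(\theta) - \delta/2 \geq \min_\Theta K^\star + \delta/2$. Combining these, $K_n(\theta) > K_n(\theta_0)$ for every $\theta \notin \Theta^\star$ and $n \geq N$, so no $\theta$ outside $\Theta^\star$ can minimize $K_n$; that is, $\hat\theta_n \in \Theta^\star$ for all $n \geq N$. When $\Theta^\star = \{\theta^\star\}$ this forces $\hat\theta_n = \theta^\star$ for all $n \geq N$, hence $\hat\theta_n \to \theta^\star$ almost surely.

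The argument is genuinely short once Theorem~\ref{thm:emp-KL} is in hand, and the only places where finiteness of $\Theta$ is essential are the passage from pointwise to simultaneous almost-sure convergence via a finite intersection and the guarantee that the gap $\delta$ is strictly positive. The real obstacle — and the reason compact $\Theta$ cannot yet be handled — is that for an infinite parameter set one would need uniform (in $\theta$) control of $K_n(\theta) - K^\star(\theta)$, or a stochastic-equicontinuity argument, to prevent the minimizer from escaping toward parameters at which the convergence has not yet taken effect. Such uniform control does not follow from the pointwise statement of Theorem~\ref{thm:emp-KL}, so the restriction to finite $\Theta$ is exactly what makes this Wald-type step available.
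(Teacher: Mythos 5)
Your proof is correct and follows essentially the same route as the paper's: both reduce the comparison to the pointwise convergence $K_n(\theta)\to K^\star(\theta)$ from Theorem~\ref{thm:emp-KL}, use finiteness of $\Theta$ to get simultaneous almost-sure convergence and a strictly positive gap between $\Theta^\star$ and its complement, and conclude the maximizer cannot lie outside $\Theta^\star$ for large $n$. The paper phrases this as $\sup_{\Theta\setminus\Theta^\star}\ell_n(\theta)-\ell_n(\theta^\star)\to-\infty$ while you phrase it as a $\delta/2$ separation of the $K_n$ values, but since $\ell_n(\theta)=\ell_{0n}-nK_n(\theta)$ these are the same argument.
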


\begin{proof}
Let $\Theta^\dagger = \Theta \setminus \Theta^\star$.  By Theorem~\ref{thm:emp-KL} and the finiteness of $\Theta$, $\sup_{\Theta^\dagger}\ell_n(\theta) - \ell_n(\theta^\star) = -n \{\inf_{\Theta^\dagger} K^\star(\theta) - \inf_{\Theta} K^\star(\theta)\} + o(n) \to -\infty$, almost surely. On the other hand, $\ell(\hat \theta_n) - \ell_n(\theta^\star) \geq 0$ by definition of $\hat \theta_n$ and thus $\hat \theta_n \not\in\Theta^\dagger$ almost surely for all large $n$. 
\end{proof}

While Theorem~\ref{thm:m-prml-e} is limited to finite $\Theta$, our empirical results Section~\ref{SS:limit} suggest that the convergence can be extended to a compact $\Theta$. Toward this, we must show that $K_n$ is either uniformly equi-continuous or convex, a task to be taken up in a future work.

It is unclear whether results similar to Theorems~\ref{thm:emp-KL} and \ref{thm:m-prml-e} hold for the profile likelihood $\lprof{n}$. Our proofs of these two theorems are based on martingale theory and rely on the fact that the summand $\log m_{i - 1, \theta}(Y_i)$ in $\lprml{n}(\theta)$ is measurable with respect to the $\sigma$-algebra generated by $Y_1, \ldots, Y_i$. This does not hold for the summands $\log m_{n,\theta}(Y_i)$ in $\log \lprof{n}(\theta)$.

\subsection{Numerical illustration}
\label{SS:limit}

Suppose $Y_1,\ldots,Y_n$ are modeled as independent observations from the mixture density $m_{f,\sigma}(y) = \int_0^1 p(y \mid \sigma, u) f(u) \,du$, where $p(y \mid \sigma,u)$ is a $N(u, \sigma^2)$ kernel.  The true model, however, is $Y_i = 0.5 + 0.1 Z_i$, where $Z_1,\ldots,Z_n$ is a random sample from the Student-t distribution with $5$ degrees of freedom.  The true density $m(y)$ underlying the shifted and scaled Student-t model cannot equal $m_{f,\sigma}(y)$ for any $f$ with support $[0, 1]$ and $\sigma > 0$, because the tails of $m_{f,\sigma}(y)$ decay exponentially while those of $m(y)$ decay polynomially.  But despite this model mis-specification, $K_n(\sigma) \to K^\star(\sigma)$ pointwise in $\sigma$ by Theorem~\ref{thm:emp-KL}.  

We approximate $K(m, m_{f,\sigma})$ with its Gaussian quadrature form 
\[ \int m(y) \log \frac{m(y)}{m_{f,\sigma}(y)} dy \approx \sum_{r = 1}^R b_r m(y_r) \log \frac{m(y_r)}{\sum_{j = 1}^J a_j p(y_r | \sigma, u_j)f(u_j)}, \]
which is then numerically optimized over $f(u_1),\ldots,f(u_J)$ to obtain an approximation to $K^\star(\sigma)$. Here $\{(a_j,u_j): j=1,\ldots,J\}$ are the Legendre node-weight pairs of order $J = 101$ for the interval $[0, 1]$, and $\{(b_r,y_r): r=1,\ldots,R\}$ are the same for $[-0.5, 1.5]$ with $R = 101$.  Figure~\ref{fig:limit} shows $K_n(\sigma)$ and the limit $K^\star(\sigma)$ for three choices of $n$, each replicated with 100 independent data sets from the shifted and scaled Student-t distribution.  In addition to showing pointwise convergence, Figure~\ref{fig:limit} also gives an indication of the conjectured convexity of $K_n(\sigma)$.
 
\begin{figure}
\begin{center}
\includegraphics[scale=0.95]{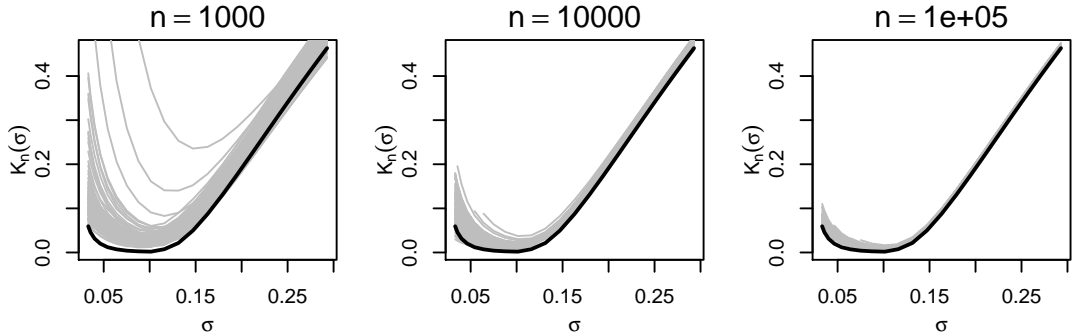}
\end{center}
\caption{Gray lines show $K_n(\sigma)$ for 100 independent data sets from the mis-specified mixture model in Section~\ref{SS:limit}.  Black lines show an approximation of the pointwise limit $K^\star(\sigma)$. }
\label{fig:limit}
\end{figure}

\section{Examples}
\label{S:examples}

\subsection{Density estimation}
\label{SS:density}

Consider density estimation where $p(y \mid \sigma, u) = N(y \mid u, \sigma^2)$ is the Gaussian kernel with bandwidth $\sigma$ playing the role of $\theta$ and the mixing density $f$ assumed to have support $[0, 1]$.  For the Bayes approach, the distribution function $F$ is modeled as a draw from a Dirichlet process distribution with unit precision parameter and a uniform base measure on $[0,1]$.  For predictive recursion, we take $f_0$ to be a uniform density on $[0,1]$, matching the Dirichlet process base measure, and set $w_i = (i+1)^{-2/3}$.  Figure~\ref{fig:sims} shows $\lmarg{n}(\sigma)$, $\lprml{n}(\sigma)$ and $\lprof{n}(\sigma)$ for 12 data sets of size $n = 50$ simulated from the Gaussian mixture model $m_{f,\sigma}(y) = \int p(y \mid \sigma,u) f(u) \,du$ for several $(\sigma, f)$ pairs.  The four different mixing distributions were chosen to capture various shapes and characteristics, including discrete, continuous, and a mixture of each.  The importance sampling technique of \citet{tmg} is used to evaluate $\lmarg{n}(\sigma)$; see also \citet{maceachernclydeliu}.    

\begin{figure}
\begin{center}
\includegraphics[scale=0.97]{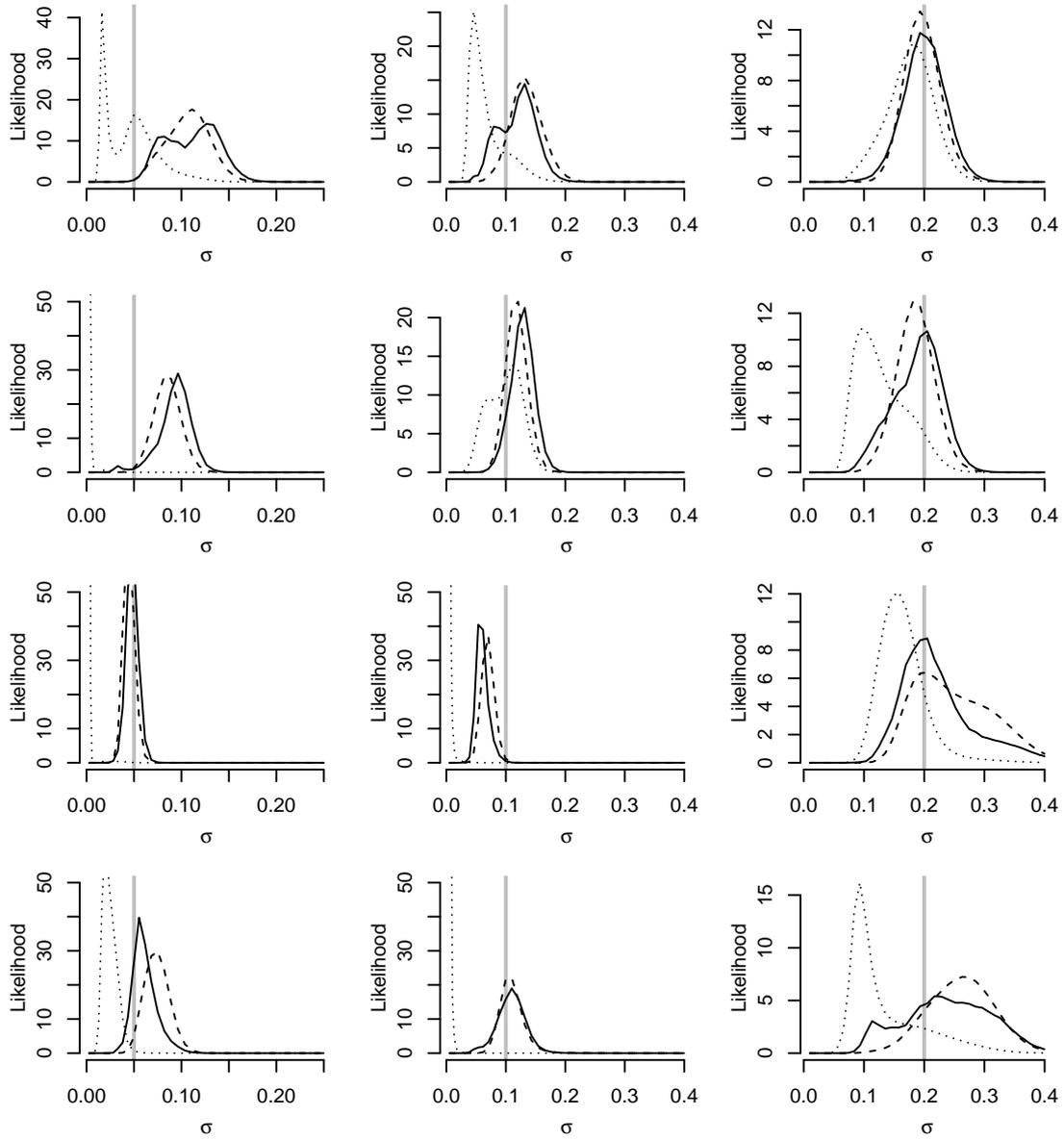}
\end{center}
\caption{Solid lines show $\lmarg{n}(\sigma)$, dashed lines show $\lprml{n}(\sigma)$, and dotted lines show $\lprof{n}(\sigma)$ for the density estimate example in Section~\ref{SS:density}.  These curves have been normalized to integrate to 1.  The true value of $\sigma$ is marked by a vertical gray line.  The figure's four rows correspond to four different mixing distributions.  The first is a $\text{Beta}(2,6)$ density; the second is a $\text{Beta}(10,30)$ density; the third is an equal mixture of point masses at $1/4$ and $3/4$; the fourth is an equal mixture of $\text{Beta}(2,6)$ and a point mass at $3/4$. }
\label{fig:sims}
\end{figure}

Figure~\ref{fig:sims} shows the normalized marginal likelihoods for the three different methods.  It is clear that $\lprml{n}(\sigma)$ closely approximates $\lmarg{n}(\sigma)$ in all cases, while $\lprof{n}(\sigma)$ deviates arbitrarily, sometimes peaking at $\sigma = 0$. The approximation of $\lmarg{n}(\sigma)$ by $\lprml{n}(\sigma)$ is robust against the smoothness and skewness properties of the underlying mixing distribution $f$.  This is quite striking because the Dirichlet process formulation views $f$ as a discrete distribution while predictive recursion is designed to recover smooth densities.  We also note that both the approximate marginal and profile likelihood calculations are orders of magnitude faster than those for Dirichlet process mixture; see \citet{tmg} for a comparison of run times.

\subsection{Random-intercept regression models}
\label{SS:mixed}

In this section we study two regression models along the lines of \citet{taonewton1999}. In each case, we consider data on a response $Y$ and a vector of predictors $X \in \RR^d$ for $n$ subjects each with $r$ replicates. In our first study, $Y$ is a continuous variable and is linked to the predictors through the random-intercept linear regression model:
\begin{equation}
\label{eq:lmm}
Y_{ij} \mid X_{ij} \sim N(U_i + X_{ij}'\beta, \sigma^2), 
\end{equation}
where $i = 1,\ldots, n$ indexes subjects, and $j = 1,\ldots, r$ indexes replicates.  We assume that the $Y_{ij}$'s are conditionally independent across both $i$ and $j$. The subject-specific intercepts, the $U_i$'s, are taken to be independent draws from a probability density $f$ with respect to the Lebesgue measure on an interval $\mathscr{U} = [a,b]$. Write $\theta = (\beta,\sigma^2)$ as the unknown parameter of interest.  A related semiparametric Bayes model appears in \citet{bush.maceachern.1996}.  

To cast this regression model as the mixture model \eqref{eq:mixture}, we assume $X_i = (X_{i1}, \ldots, X_{ir})$ are stochastic, sampled independently from a density $g$ over $\RR^d \times \cdots \times \RR^d$.  Then \eqref{eq:lmm} says that $(X_i, Y_i)$, where $Y_i = (Y_{i1}, \cdots, Y_{ir})$, are independent observations from a density $m = m_{f,\theta}$ as in \eqref{eq:mixture} with a kernel
\begin{equation}
\label{eq:lmm.kernel}
p(x, y \mid \theta, u) = p_x(y \mid \theta,u) g(x),
\end{equation}
where the conditional density of $Y$ given $x$ is 
\[ p_x(y \mid \theta, u) = \frac{1}{(2\pi\sigma^2)^{r/2}} \exp\Bigl\{-\frac{1}{2\sigma^2} \sum_{j = 1}^r (y_j - x_j'\beta - u)^2 \Bigr\}. \]
With this setup, $\theta$ can be estimated by maximizing the predictive recursion marginal or profile likelihood.  The predictor density $g$ need not be estimated: it drops out from the updating equation \eqref{eq:recursion}, and so does not affect $f_{i, \theta}$.  Consequently, $m_{i,\theta}(x, y) = m_{i,\theta,x}(y)g(x)$, where $m_{i,\theta,x}(y) = \int p_x(y|\theta,u)f_{i,\theta}(u) \,du$ does not involve $g$, and hence $\log \lprml{n}(\theta)$ incorporates $g$ only through an additive constant.  Also note that by factoring $m(x, y) = m_x(y)g(x)$, one can write $K(m, m_{f,\theta}) = \int K(m_x, m_{f, \theta,x}) g(x)\,dx$ where $m_{f, \theta,x}(y) = \int p_x(y|\theta, u) f(u)\,du$. Thus the $(\theta^\star, f^\star)$ that characterizes the limiting asymptotic properties of $\lprml{n}(\theta)$ minimizes an average Kullback--Leibler divergence $K(m_x, m_{f,\theta,x})$ of the conditional densities weighted by $g(x)$.

Evaluation of $\lprml{n}(\theta)$ requires a single pass through the predictive recursion algorithm for each $\theta$, which is computationally inexpensive, even for large $n$.  Optimization, however, requires several evaluations of $\lprml{n}$ and its gradient.  With computational efficiency in mind, we present a version of predictive recursion that also produces $\grad \lprml{n}$ as a by-product, with no substantial increase in computational cost; see Appendix~2.  This gradient algorithm, coupled with any packaged optimization routine, makes for fast semiparametric estimation of $\theta$.

For inference on $\theta$, the exact sampling distribution of $\hat\theta$ in \eqref{eq:mprmle} will not be available in general, so some sort of approximation is needed.  Here we propose a curvature based approximation, where the covariance matrix of $\hat\theta$ is estimated by the inverse Hessian $H = \{-\grad^2 \ell_n(\hat \theta)\}^{-1}$, readily obtained from the output of the optimization routine.  A stipulated $100(1 - \alpha)\%$ confidence interval for $\theta_j$ can be obtained by taking $\hat \theta_j \pm z_{\alpha / 2} h_{jj}$, where $z_q$ denotes the upper $q^{\text{th}}$ quantile of the standard normal distribution, and $h_{jj}$ is the $j^{\text{th}}$ diagonal element of $H$. 

Table~\ref{tab:lmm} reports the performance of the predictive recursion-based marginal and profile likelihood estimates averaged over 500 datasets generated from model \eqref{eq:lmm} with $n = 50$ or $500$, $r = 4$, $d=2$, $\beta = (2,5)'$, and $\sigma = 2$.  The covariates $X_{ij1}$'s are independent $N(0, 1)$ and $X_{ij2} = J_i + 0.1 Z_{ij}$, with $J_i$'s satisfying $\prob(J_i=0) = \prob(J_i=1) = 0.5$ and $Z_{ij}$'s independent $N(0,1)$. In this case, $X_{ij1}$ is a within-subject covariate while $X_{ij2}$, which is roughly constant in $j$, acts like a subject-specific covariate.  Three choices of $f$ are considered: $N(0, 2^2)$, a shifted exponential distribution with rate 0.5 and support $(-2,\infty)$, and a discrete uniform supported on $\pm 2$.  Each choice of $f$ has mean zero and variance 4. See \citet{taonewton1999} for more details on these choices. For comparison to a parametric model fit, we also include a likelihood-based method that assumes the mixing distribution is Gaussian with unknown parameters. 

Performance of each estimate is measured by the root mean-square error in estimating each component of $\theta$. We also include average coverage of a stipulated 95\% confidence interval for each estimating method constructed as described above.  Predictive recursion is run with $w_i = (1 + i)^{-2/3}$, $\U = \overline{Y} \pm 3 S_Y$ where $\overline{Y}$ and $S_Y$ are the mean and standard deviation of the full data $\{Y_{ij}\}$, and  $f_0$ the uniform density over $\U$.  In this example, the predictive recursion-based marginal and profile likelihood methods perform similarly.  Both are competitive with the parametric Gaussian method when $f$ is indeed Gaussian, and are better for non-Gaussian $f$. The relative similarity between the marginal and the profile likelihood methods, unlike what we observed in the density estimation example before, can be explained by noting that for model \eqref{eq:lmm}, the data is informative about the parameter $\sigma$ due to availability of replicates.  We note that although the coverage probabilities are generally close to the stipulated 95\% level, there are some noticeable differences.  First, the marginal and profile likelihood coverage probabilities for $\beta_2$ are off the mark in the small $n$ case.  That $X_{ij2}$ is partially confounded with the group structure is one potential explanation for this phenomenon.  Second, in estimating $\sigma$ for the discrete uniform model, which lies in the boundary of $\FF$, the coverage falls dangerously low, even for large $n$.  For such boundary cases, bootstrap confidence intervals might be more appropriate; see Section~\ref{S:discuss}. 

In our second example, we retain much of the above setting but consider a binary $Y$ for which model \eqref{eq:lmm} is adapted to the a random-intercept logistic regression model:
\begin{equation}
\label{eq:glmm}
\text{logit} \bigl\{ \prob(Y_{ij} = 1) \bigr\} = U_i + X_{ij}'\beta,
\end{equation}
where $U_i$'s are independent draws from a uniform density $f$ on $\U = [-8,8]$ and $\theta = (\beta_1, \beta_2)$ is unknown. This model corresponds to \eqref{eq:mixture} through \eqref{eq:lmm.kernel} with an appropriate choice of $p_x(y|\theta,u)$.  The last four columns of Table~\ref{tab:lmm} reports the performance of maximum likelihood estimates of $\theta$ based on the predictive recursion marginal and profile likelihood and the parametric Gaussian set, with same choices for $\beta_1$, $\beta_2$, $n$, $r$, $f$, $w_i$, and $\{X_{ij}\}$ as in our first study.  For $n=500$, all methods perform similarly; for $n=50$ the proposed marginal likelihood approach is better in terms of both estimation accuracy and coverage. 

\begin{table}
\begin{center}
{\small
\begin{tabular}{cc|ccc|ccc|cc|cc}
& & \multicolumn{6}{c}{Study I} & \multicolumn{4}{|c}{Study II}\\[2pt]
$f$ & Method & \multicolumn{3}{c}{RMSE} & \multicolumn{3}{|c|}{Coverage} & \multicolumn{2}{c}{RMSE} & \multicolumn{2}{|c}{Coverage}\\ 
 & & $\beta_1$ & $\beta_2$ & $\sigma$ & $\beta_1$ & $\beta_2$ & $\sigma$ & $\beta_1$ & $\beta_2$ & $\beta_1$ & $\beta_2$\\ 
\hline
&& \multicolumn{6}{|c}{$n = 50$} & \multicolumn{4}{|c}{$n = 50$}\\
\hline
\multirow{3}{*}{Gaussian} & Gaussian & 0.16 & 0.60 & 0.12 & 95 & 95 & 94 & 1.02 & 2.48 & 88 & 86 \\ 
  & Marginal & 0.16 & 0.68 & 0.12 & 95 & 86 & 94 & 0.61 & 1.58 & 95 & 91 \\ 
  & Profile & 0.16 & 0.71 & 0.12 & 94 & 89 & 92 & 0.65 & 1.78 & 95 & 90\\ 
\hline
\multirow{3}{*}{Exponential} & Gaussian & 0.17 & 0.61 & 0.11 & 93 & 94 & 95 & 0.86 & 2.13 & 90 & 91\\
  & Marginal & 0.17 & 0.52 & 0.11 & 94 & 91 & 95 & 0.64 & 1.66 & 96 & 96 \\ 
  & Profile & 0.17 & 0.50 & 0.12 & 94 & 92 & 94 & 0.67 & 1.76 & 96 & 96 \\ 
\hline
\multirow{3}{*}{Uniform} & Gaussian & 0.15 & 0.58 & 0.11 & 96 & 95 & 95 & 0.94 & 2.42 & 87 & 88\\ 
  & Marginal & 0.14 & 0.36 & 0.11 & 96 & 97 & 93 & 0.82 & 1.92 & 92 & 92 \\ 
  & Profile & 0.14 & 0.34 & 0.12 & 96 & 97 & 90 & 0.89 & 2.14 & 92 & 93 \\ 
\hline
&& \multicolumn{6}{|c}{$n = 500$} & \multicolumn{4}{|c}{$n = 500$}\\
\hline
\multirow{3}{*}{Gaussian} & Gaussian & 0.05 & 0.18 & 0.04 & 96 & 95 & 95 & 0.17 & 0.43 & 88 & 85 \\ 
  & Marginal & 0.05 & 0.19 & 0.04 & 96 & 94 & 95 & 0.15& 0.39 & 93 & 93 \\ 
  & Profile & 0.05 & 0.19 & 0.04 & 96 & 94 & 95 & 0.16 & 0.41 & 93 & 95\\ 
\hline
\multirow{3}{*}{Exponential} & Gaussian & 0.05 & 0.18 & 0.04 & 93 & 96 & 95 & 0.15 & 0.41 & 90 & 87\\ 
  & Marginal & 0.05 & 0.15 & 0.04 & 95 & 94 & 95 & 0.15 & 0.32 & 96 & 96  \\ 
  & Profile & 0.05 & 0.14 & 0.04 & 95 & 95 & 94 & 0.15 & 0.34 & 95 & 95\\ 
\hline
\multirow{3}{*}{Uniform} & Gaussian & 0.05 & 0.19 & 0.04 & 96 & 94 & 94 & 0.18 & 0.50 & 88  & 84\\ 
  & Marginal & 0.05 & 0.11 & 0.05 & 94 & 95 & 80 & 0.20 & 0.52 & 91 & 89 \\ 
  & Profile & 0.05 & 0.11 & 0.05 & 95 & 94 & 84 & 0.21 & 0.54 & 90 & 86 \\ 
\hline
\end{tabular}
}
\end{center}
\caption{Comparison of the predictive recursion-based marginal and profile likelihood methods along with a parametric Gaussian model for parameter estimation in the random-intercept regression models in Section~\ref{SS:mixed}.  RMSE is root mean-square error.  Coverage probabilities are multiplied by 100.  Studies~I and II denote the linear and logistic models, respectively.  }
\label{tab:lmm}
\end{table}

The marginal and profile likelihood methods used above inherit the order-dependence characteristic of predictive recursion.  That is, $\lprml{n}(\theta)$ and $\lprof{n}(\theta)$ depend on the order in which the data enter the recursion \eqref{eq:recursion}.  The order is not important asymptotically, but finite sample behaviors may be sensitive to the specific order used.  A simple remedy to this is to apply predictive recursion on $M$ random permutations of the data and average the $M$ resulting marginal and profile likelihood functions prior to optimization.  Permutation-averaging in the two regression examples gave only marginally better results compared to what is shown in Table~\ref{tab:lmm}.

\subsection{Multiple testing in time series}
\label{SS:testing}

Let $Y = \{Y(t):t=1,2,\ldots,T\}$ denote a discrete-time stochastic process under a first-order autoregressive model; i.e., $Y(0)$ is normal with mean 0 and variance $\sigma^2 (1-\phi)^{-1}$, and 
\[ Y(t) = \xi + \phi Y(t-1) + \sigma \eps(t), \quad \eps(t) \sim N(0,1), \quad t = 1,2,\ldots,T, \]
for $\xi \in \RR$, $\phi \in (-1,1)$ and $\sigma^2 > 0$.  In other words, $Y$ has a $T$-dimensional normal distribution with mean $\xi 1_T = (\xi,\ldots,\xi)' \in \RR^T$ and $T \times T$ covariance matrix $\Sigma_u$ of the form 
\[ \Sigma_{u,jk} = \frac{\sigma^2}{1-\phi} \, \phi^{|j-k|}, \quad u = (\sigma^2,\phi) \in \U \subset (0,\infty) \times (-1,1). \] 
For a sample $Y_1,\ldots,Y_n$ of similar processes, consider a mixture of simple first-order autoregressive models, namely, 
\begin{equation}
\label{eq:armixture}
Y_i \mid (\xi_i,U_i) \sim N(\xi_i 1_T, \Sigma_{U_i}), \quad U_i \sim f(u), \quad \xi_i \mid \theta \sim \theta \langle 0 \rangle + (1-\theta) N(0,1),
\end{equation}
where independence is assumed throughout, and $\langle 0 \rangle$ denotes a degenerate distribution at 0.   It shall be further assumed that $\theta$ is close to 1, so that $Y_1,\ldots,Y_n$ are sparse in the sense that most have mean zero.  The goal is to identify those processes with non-zero mean trajectories.  

Model \eqref{eq:armixture} is similar to that which appears in \citet{scott2009}.  He considers a financial time-series application in which $Y_i(t)$ is related to the $i^{\text{th}}$ firm's return on assets for year $t$.  \cite{scott2009} takes the $Y_i(t)$ to be the standardized residuals obtained when return on assets is regressed on several important factors.  Therefore, firms whose $Y$-process has zero mean are ordinary, those with non-zero mean are somehow extraordinary.  The goal is to flag those extraordinary firms.  \citet{scott2009} considers a nonparametric Bayes model where $f$ is sampled from a Dirichlet process distribution, and the signal trajectories are constant zero with probability $\theta$ or sampled from a Gaussian process with probability $1-\theta$.  Model \eqref{eq:armixture} is a special case in which the signal trajectories are restricted to constant functions.  But the assumption of constant paths is not critical.  Indeed, the marginal likelihood procedure described below can be extended to handle signal trajectories with a relatively low-dimensional parametrization.  

From model \eqref{eq:armixture}, it is clear that $Y_1,\ldots,Y_n$ are independent with common mixture density
\begin{align}
m_{f,\theta}(y) & = \int_{\U} \int_{-\infty}^\infty N(y \mid \xi 1, \Sigma_u) \bigl\{ \theta \langle 0 \rangle (\xi) + (1-\theta) N(\xi \mid 0,1) \bigr\} \,d\xi \, f(u) \,du  \notag \\
& = \int_{\U} \bigl\{ \theta N(y \mid 0, \Sigma_u) + (1-\theta) N(y \mid 0, \Sigma_u + 1_T 1_T') \bigr\} f(u)\,du. \label{eq:ARmix}
\end{align}
Here, with the particular choice of parametric mixing distribution, $\xi$ can be integrated out analytically, leaving only a mixture over $u$.  Predictive recursion marginal likelihood is used to estimate $\theta$ and $f$ and, in turn, these estimates are used to classify the sample paths $Y_1,\ldots,Y_n$.  

If $\theta$ and $f$ were known, the Bayes oracle rule for classifying $Y_i$ as null or non-null is as follows: for 0--1 loss, conclude $Y_i$ is non-null if 
\[ \theta_i = \frac{\theta \int N(Y_i \mid 0, \Sigma_u)f(u)\,du}{m_{f,\theta}(Y_i)} \leq 0.5. \]
Since $\theta$ and $f$ are unknown, we mimic the Bayes oracle classifier with the plug-in estimate 
\begin{equation}
\label{eq:fdr}
\hat\theta_i = \frac{\hat\theta \int N(Y_i \mid 0, \Sigma_u) f_{n,\hat\theta}(u) \,du}{m_{n, \hat\theta}(Y_i)}, 
\end{equation}
which is an estimate of the local false discovery rate \citep{efron2004, efron2008, suncai2007}, and can be viewed as an empirical Bayes approximation of the posterior inclusion probabilities in \citet{scott2009}, there obtained by Markov chain Monte Carlo.  

For illustration, we simulate 100 datasets from the model \eqref{eq:armixture}, with $n=5000$ and $T=50$, and compare the performance of our empirical Bayes classifier \eqref{eq:fdr} to the Bayes oracle.  We consider both a dense case, $\theta = 0.75$, and a sparse case, $\theta = 0.95$.  In both cases the true mixing distribution $f(u)=g_1(\phi)g_2(\sigma^2)$ is a product of shifted and scaled beta densities.  Predictive recursion marginal likelihood is applied, averaging over 25 random permutations, to estimate $\theta$ and a summary of the estimates for both the dense and sparse cases can be found in Table~\ref{table:armix}.  We find that the estimates of $\theta$ are unbiased with relatively small standard errors.  Histograms, not displayed, show roughly symmetric distributions centered at the true $\theta$.  

For the testing/classification problem, we consider two measures of performance: false discovery rate and misclassification probability.  Empirical estimates of these two quantities appear in Table~\ref{table:armix} for both the plug-in and Bayes oracle rules, and both dense and sparse cases.  The difference between false discovery rates is negligible in both the dense and sparse cases.  Furthermore, the misclassification probability for the predictive recursion-based empirical Bayes classifier is just slightly higher than that of the Bayes oracle, suggesting that the latter mimics the Bayes oracle classifier very well.

\begin{table}
\begin{center}
{\small
\begin{tabular}{cccccc}
\hline 
 & Method & Mean & SD & FDR & MP \\
\hline 
Dense, $\theta = 0.75$ & Plug-in & 0.750 & 0.008 & 0.085 & 0.109 \\
 & Oracle & --- & --- & 0.083 & 0.108 \\
Sparse, $\theta = 0.95$ & Plug-in & 0.949 & 0.004 & 0.082 & 0.026 \\
 & Oracle & --- & --- & 0.073 & 0.026 \\
\hline
\end{tabular}
}
\end{center}
\caption{Summary of inference in the autoregressive process mixture model simulations for the plug-in empirical Bayes and Bayes oracle methods. Mean denotes the simulation mean of $\hat\theta$ and SD is its standard deviation; FDR is the observed false discovery rate of the test and MP is its misclassification probability.}
\label{table:armix}
\end{table}

\section{Discussion}
\label{S:discuss}

The focus in this paper is on frequentist semiparametric inference in mixture models with a predictive recursion-based approximation to a Dirichlet process mixture marginal likelihood.  But a Bayesian version can be developed without much additional effort.  In particular, information gathered from maximizing $\lprml{n}(\theta)$ in \eqref{eq:Lprml} can be used to construct efficient importance samplers to carry out posterior computation, i.e., the maximizer and Hessian matrix of $\lprml{n}(\theta)$ can be used to construct Gaussian or heavy-tailed Student-t importance densities for $\theta$ \citep{geweke1989}. 

The choice of weights $\{w_n: n \geq 1\}$ for predictive recursion remains an important open problem.  Convergence theory gives only minimal guidelines but, in our experience, the finite sample performance is relatively robust to the choice of weights.  In this paper, we have employed the theoretically ideal weights $w_i = (i+1)^{-\gamma}$, with $\gamma = 2/3$, based on the rate in Theorem~\ref{thm:mt}.  An alternative is to let the exponent $\gamma$ be an additional tuning parameter to maximize the approximate marginal likelihood $\lprml{n}$ over \citep{taonewton1999}.  It is not yet clear, however, if the convergence theorems of \citet{mt-rate} can cover data-dependent weight sequences.  In the random-intercept regression problems of Section~\ref{SS:mixed}, the approach with estimated $\gamma$ gave similar results, not reported, to the that with fixed $\gamma=2/3$.  

The use of Hessian-based approximations of the sampling distribution of the predictive recursion marginal likelihood estimate $\hat\theta$ is based on a theory of asymptotic normality which is not yet available.  An alternative to the Hessian-based approach is a basic bootstrap.  Empirical results, not presented here, indicate that bootstrap-based confidence intervals have good coverage properties, but progress on the validity of the bootstrap for semiparametric problems has only just recently been made \citep{chenghuang2010}.  Since $\log\lprml{n}(\theta)$ is not an empirical processes, these first results do not directly apply in our context.  It is our experience that both of these approaches for approximating the sampling distribution of $\hat\theta_n$ are successful, but we leave theoretical verification of their validity to future research.

\section*{Acknowledgments}

The authors are grateful to Professor J.~K.~Ghosh for many helpful discussions, and to the Associate Editor and two referees for a number of invaluable suggestions.

\appendix

\section*{Appendix 1: Proof of Theorem~\ref{thm:emp-KL}}

Fix $\theta$ and define the sequence of random variables $Z_i = Z_i(\theta)$ as  
\[ Z_i = \log \frac{m(Y_i)}{m_{i-1,\theta}(Y_i)} - K(m,m_{i-1,\theta}), \quad i \geq 1 \]
and note that $\E(Z_i \mid \A_{i-1}) = 0$, where $\A_{i-1} = \sigma(Y_i,\ldots,Y_{i-1})$.  Therefore, $\{(Z_i,\A_i): i \geq 1\}$ forms a zero mean martingale sequence.  Next, let $m_{f,\theta}$ be the mixture density closest to $m$ in the Kullback--Leibler sense.  Then we can write 
\begin{align*}
\E(Z_i^2 \mid \A_{i-1} ) & \leq \int \Bigl\{ \log \frac{m(y)}{m_{i-1,\theta}(y)} \Bigr\}^2 m(y) \,dy \\
& = \int \Bigl\{ \log\frac{m_{f,\theta}(y)}{m_{i-1,\theta}(y)} + \log\frac{m(y)}{m_{f,\theta}(y)} \Bigr\}^2 m(y) \,dy \\
& \leq 2 \int \Bigl\{ \log\frac{m_{f,\theta}(y)}{m_{i-1,\theta}(y)} \Bigr\}^2 m(y) \,dy + 2 \int \Bigl\{ \log\frac{m(y)}{m_{f,\theta}(y)} \Bigr\}^2 m(y) \,dy \\
& = 2T_1 + 2T_2.
\end{align*}
The second term, $T_2$, is bounded by a constant $B$ according to Assumption~\ref{as:bound2}.  For the first term, let $\Y_0 = \{y: m_{f,\theta}(y) < m_{i-1,\theta}(y)\}$.  By properties of the logarithm we get 
\begin{align}
T_1 & = \int \Bigl\{ \log\frac{m_{f,\theta}(y)}{m_{i-1,\theta}(y)} \Bigr\}^2 \, m(y) \,dy \notag \\
& = \int_{\Y_0} \Bigl\{ \log\frac{m_{i-1,\theta}(y)}{m_{f,\theta}(y)} \Bigr\}^2 \, m(y) \,dy + \int_{\Y_0^c} \Bigl\{ \log\frac{m_{f,\theta}(y)}{m_{i-1,\theta}(y)} \Bigr\}^2 \, m(y) \,dy \notag \\
& \leq \int_{\Y_0} \Bigl\{ \frac{m_{i-1,\theta}(y)}{m_{f,\theta}(y)}-1 \Bigr\}^2 \, m(y) \,dy + \int_{\Y_0^c} \Bigl\{ \frac{m_{f,\theta}(y)}{m_{i-1,\theta}(y)} -1 \Bigr\}^2 \, m(y) \,dy \notag \\
& \leq 2 + \int \Bigl[ \Bigl\{ \frac{m_{i-1,\theta}(y)}{m_{f,\theta}(y)} \Bigr\}^2 + \Bigl\{ \frac{m_{f,\theta}(y)}{m_{i-1,\theta}(y)} \Bigr\}^2 \Bigr] m(y) \,dy \notag \\
& \leq 2 + 2 \sup_{u_1,u_2} \int \Bigl\{ \frac{p(y|u_1,\theta)}{p(y|u_2,\theta)} \Bigr\}^2 m(y) \,dy, \label{eq:jensen} 
\end{align}
where \eqref{eq:jensen} follows by two applications of Jensen's inequality, one to the mapping $x \mapsto 1/x$ and one to the mapping $x \mapsto x^2$.  The last term is bounded according to Assumption~\ref{as:bound1}.  Therefore, $\E(Z_i^2 \mid \A_{i-1})$ is uniformly bounded by a constant $M$ and, consequently, $v_n = \sum_{i=1}^n \E( Z_i^2 \mid \A_{i-1} ) \leq Mn$.  

Set $b_n = n/c_n$, where $c_n = a_n$ or $c_n \equiv 1$, depending on whether or not the conditions of the second part of the theorem hold.  It is clear that $b_n$ grows faster than $n^{1/2}$, but no faster than $n$, which implies 
\begin{equation}
\label{eq:teicher1}
\frac{v_n^{1/2}}{b_n (\log\log b_n)^{-1/2}} \leq \frac{(M n \log \log b_n)^{1/2}}{b_n} \to 0. 
\end{equation}
Furthermore, by Markov's inequality, we have, with probability 1, 
\begin{equation}
\label{eq:teicher2}
\sum_{n=1}^\infty \prob\Bigl(|Z_n| > \frac{b_n}{\log\log b_n} \, \Bigl\lvert \, \A_{n-1} \Bigr) \leq M \sum_{n=1}^\infty \frac{(\log\log b_n)^2}{b_n^2} < \infty. 
\end{equation}
In light of \eqref{eq:teicher1} and \eqref{eq:teicher2}, it now follows from Corollary~2 of \citet{teicher98} that $b_n^{-1}\sum_{i=1}^n Z_i \to 0$ almost surely.  Therefore, we can conclude that, with probability 1, 
\begin{align}
c_n \Bigl\lvert &K_n(\theta) - \frac1n \sum_{i=1}^n K(m,m_{i-1,\theta}) \Bigr\rvert\notag \\
& = \Bigl\lvert c_n\bigl\{ K_n(\theta) - K^\star(\theta) \bigr\} - \frac{c_n}{n} \sum_{i=1}^n \bigl\{ K(m,m_{i-1,\theta})-K^\star(\theta) \bigr\} \Bigr\rvert \to 0.  \label{eq:emp-KL}
\end{align}
If $c_n \equiv 1$, then the result follows from Theorem~\ref{thm:mt} and Cesaro's theorem.  If $c_n = a_n$, write $\kappa_i = K(m,m_{i-1,\theta})-K^\star(\theta)$ and note that $\kappa_i \geq 0$.  Summation-by-parts and monotonicity of the weights $w_i$ yield the following inequality:
\begin{equation}
\label{eq:cesaro}
a_n \sum_{i=1}^n \kappa_i = \sum_{i=1}^n a_i \kappa_i + \sum_{i=1}^{n-1} w_{i+1} \Bigl( \sum_{j=1}^i \kappa_j \Bigr) \leq \sum_{i=1}^n a_i \kappa_i + \sum_{i=1}^n \Bigl( \sum_{j=1}^i w_j \kappa_j \Bigr). 
\end{equation}
Let $S_i = \sum_{j=1}^i w_j \kappa_j$ for $i \geq 1$.  Lemma~4.4 of \citet{mt-rate} shows that the limit $S_\infty$ is finite almost surely.  Dividing through \eqref{eq:cesaro} by $n$ and applying Cesaro's theorem shows that the right-hand side is positive and bounded by $S_\infty$ almost surely for large $n$.  This observation, together with \eqref{eq:emp-KL}, implies $a_n \{K_n(\theta) - K^\star(\theta)\}$ is almost surely bounded, proving the theorem.

\section*{Appendix 2: Predictive recursion gradient algorithm}

Here we present a version of the predictive recursion that gives the gradient of $\ell_n(\theta) = \log\lprml{n}(\theta)$ as a by-product.  Let $\lambda_i(\theta) = m_{i-1,\theta}(Y_i)$; then $\grad \ell_n(\theta) = \sum_{i=1}^n \grad \log \lambda_i(\theta)$.  For a function $g(\theta,u)$, the notation $\grad g(\theta,u)$ means the gradient of $g$ with respect to $\theta$, pointwise in $u$.  

As in the original algorithm, the user must be able to evaluate the kernel $p(Y_i \mid \theta,u)$ at each $Y_i$ for any pair $(\theta,u)$.  Furthermore, for this modification, the user must also be able to evaluate $\grad p(Y_i \mid \theta,u)$.  

\begin{enumerate}

\item Start with the user-defined $f_{0,\theta}(u)$ and compute $\grad f_{0,\theta}(u)$.  

\item For $i=1,\ldots,n$, repeat the following three steps:

\begin{enumerate}

\item Set $g(\theta,u) = p(Y_i \mid \theta,u)$, $\grad g(\theta,u) = \grad p(Y_i \mid \theta,u)$, and 
\[ G(\theta,u) = g(\theta,u) \grad f_{i-1,\theta}(u) + \grad g(\theta,u) f_{i-1,\theta}(u). \]

\item Compute $\lambda_i(\theta) = \int g(\theta,u) f_{i-1,\theta}(u) \,d\mu(u)$ and $\grad \log \lambda_i(\theta) = \int G(\theta,u) \,d\mu(u) / \lambda_i(\theta)$.

\item Update 
\begin{align*}
f_{i,\theta}(u) & = (1-w_i) f_{i-1,\theta}(u) + w_i \frac{g(\theta,u) f_{i-1,\theta}(u)}{\lambda_i(\theta)}, \\
\grad f_{i,\theta}(u) & = (1-w_i)\grad f_{i-1,\theta}(u) + w_i \Bigl\{ \frac{G(\theta,u) - g(\theta,u) f_{i-1,\theta}(u) \grad \log \lambda_i(\theta)}{\lambda_i(\theta)} \Bigr\}.
\end{align*}

\end{enumerate}

\item Return $f_{n,\theta}(u)$, $m_{n,\theta}(y)$, and $\sum_{i=1}^n \grad\log \lambda_i(\theta)$.  

\end{enumerate}

\bibliographystyle{/home/rgmartin/Research/TexStuff/asa}
\bibliography{/home/rgmartin/Research/mybib}

\end{document}